\newtheorem{lemma}{Lemma}
\newtheorem{theorem}{Theorem}
\newtheorem{corollary}{Corollary}
\title{Support Constrained Generator Matrices of\\Gabidulin Codes in Characteristic Zero}
\author{
\textbf{Hikmet~Yildiz}\textsuperscript{$*$},
~\textbf{Netanel~Raviv}$^\dagger$,
~and~\textbf{Babak~Hassibi}\textsuperscript{*}\\
\small\textsuperscript{*}Department of Electrical Engineering, California Institute of Technology, Pasadena CA 91125\\
$^\dagger$Department of Computer Science and Engineering, McKelvey School of
Engineering,\\
Washington University in Saint Louis, St. Louis, MO, 63130\\
\mbox{hyildiz@caltech.edu, netanel.raviv@wustl.edu, hassibi@caltech.edu}
}
\begin{document}
\maketitle

\begin{abstract}
    Gabidulin codes over fields of characteristic zero were recently constructed by Augot~\textit{et al.}, whenever the Galois group of the underlying field extension is cyclic. In parallel, the interest in sparse generator matrices of Reed--Solomon and Gabidulin codes has increased lately, due to applications in distributed computations. In particular, a certain condition pertaining to the intersection of zero entries at different rows, was shown to be necessary and sufficient for the existence of the sparsest possible generator matrix of Gabidulin codes over finite fields. In this paper we complete the picture by showing that the same condition is also necessary and sufficient for Gabidulin codes over fields of characteristic zero. 
    
    Our proof builds upon and extends tools from the finite-field case, combines them with a variant of the Schwartz--Zippel lemma over automorphisms, and provides a simple randomized construction algorithm whose probability of success can be arbitrarily close to one. In addition, potential applications for low-rank matrix recovery are discussed.  
\end{abstract}


\section{Introduction}
Over finite fields, Gabidulin codes~\cite{delsarte1978bilinear,gabidulin1985theory} can be seen as a rank-metric equivalent of Reed--Solomon codes, where instead of evaluating ordinary polynomials, one uses \textit{linearized polynomials} (i.e., whose only nonzero coefficients are for monomials whose degree is a nonnegative integer power of the field characteristic). To properly generalize this definition to fields of characteristic zero, it was recently suggested in~\cite{augot2018generalized} to employ $\theta$--polynomials, which are linear combinations of compositions of a generator~$\theta$ of the underlying Galois group of the field extension (that must be cyclic).

Independently, there has been a surge of interest lately in constructing sparsest generator matrices for Reed--Solomon and Gabidulin codes~\cite{yildiz2019gabidulin,dau2014existence,halbawi2014distributed,yildiz2019optimum,lovett2018mds}, for several applications in distributed computing. Since the rows of a generator matrix are codewords, each row cannot contain more than~$k-1$ zeros according to the Singleton bound, where~$k$ is the dimension of the code. The so called GM--MDS conjecture, posed by~\cite{dau2014existence} and solved by~\cite{yildiz2019optimum} and~\cite{lovett2018mds}, asserts that this maximum number of zeros at every row is attainable, as long as a certain condition regarding the position of zeros is satisfied. Specifically, this condition requires the zero-entries at every set of rows to intersect in at most~$k$ minus the number of rows in the intersection.

In this paper we complete the picture by showing that the same condition is necessary and sufficient for the existence of sparse generator matrices for Gabidulin codes over fields of characteristic zero. We note that while the proof of the equivalent condition for Reed--Solomon codes is identical for finite fields and fields of characteristic zero, for Gabidulin codes this is \textit{not} the case, and the proof from~\cite{yildiz2019gabidulin} fails over the latter fields. However, by adopting notions from the Reed--Solomon equivalent (the ``Simplified GM--MDS conjecture'' \cite[Thm.~3]{yildiz2019optimum}), and combining with a variant of the well-known Schwartz--Zippel lemma, we are able to resolve the problem over fields of characteristic zero. Moreover, our proof also provides a randomized construction algorithm whose probability of success can be arbitrarily high; similar randomized construction algorithms exist for the finite variants of the problem, but their probability of success is lower.

Beyond their application in network coding~\cite{silva2008rank}, space-time codes~\cite{lusina2003maximum}, and cryptography~\cite{gabidulin1991ideals}, Gabidulin codes have applications in \textit{low rank matrix recovery}~\cite{muelich2017low} (LRMR), which is normally performed over fields of characteristic zero. In this problem, one reconstructs a low-rank matrix from a given set of linear measurements. If these linear measurements are given by multiplication of the unknown matrix by a parity-check matrix of a Gabidulin code, this problem reduces to syndrome decoding of the respective zero codeword. Since the parity-check matrix of a Gabidulin code has a similar structure to that of the generator matrix~\cite[Prop.~8]{augot2018generalized}, our results imply that when performing LRMR with Gabidulin codes, one may employ linear measurements that depend on a small number of entries of the unknown matrix.

The problem is formally stated in Section~\ref{section:ProblemSetup}, alongside necessary mathematical background. Our main results are summarized in Section~\ref{section:mainResults}, and proved in Section~\ref{sec:proofs} by using auxiliary claims given in Section~\ref{section:MoreOn}. 

\subsection{Notations}
Let $[n]=\{1,2,\dots,n\}$.
Denote the dimension of a subspace~$V$ over a field $\mathsf F$  by $\dim_{\mathsf F} V$ and the span of the elements in a set~$S$ over the field $\mathsf F$ by $\operatorname{span}_{\mathsf F} S$. The (total) degree of a (multivariate) polynomial~$f$ is denoted by~$\deg f$ (e.g. $\deg(x^2y^2+x^3)=4$). 
For an $m\times n$ matrix $\mathbf X$ and $I\subseteq[m],J\subseteq[n]$, $\mathbf X_{I,J}$ is the submatrix with the rows and columns indexed in $I$ and $J$ respectively.
Let $\mathbf X_{I,:}=\mathbf X_{I,[n]}$ and $\mathbf X_{:,J}=\mathbf X_{[m],J}$ and when $I$ or $J$ has a single element, we sometimes write the element only instead of the set.

\section{Problem Setup}\label{section:ProblemSetup}
In this section we will first provide a brief background on cyclic Galois extensions.
Then, we will define rank metric codes and Gabidulin codes.
Finally, we will define our problem, namely, finding Gabidulin codes with support constrained generator matrices over a field of characteristic zero.

\subsection{Field extensions}
Let $\mathsf E/\mathsf F$ be a field extension of finite degree, i.e. the dimension of $\mathsf E$ as a vector space over $\mathsf F$ is finite, and let $\dim_{\mathsf F}\mathsf E=m$.
The automorphism group of $\mathsf E/\mathsf F$, $\operatorname{Aut}(\mathsf E/\mathsf F)$, is the set of automorphisms of $\mathsf E$ that fix $\mathsf F$, i.e.
\begin{align*}
    \operatorname{Aut}(\mathsf E/\mathsf F)=\{
    &\theta : \mathsf E\to\mathsf E \text{ automorphism}\mid
    \forall x\in\mathsf F, \theta(x)=x\},
\end{align*}
with the group operation of function composition~$\circ$.
If $\left|\operatorname{Aut}(\mathsf E/\mathsf F)\right|=m$, $\mathsf E/\mathsf F$ is called a Galois extension, in which case, $\operatorname{Aut}(\mathsf E/\mathsf F)$ is also denoted by $\operatorname{Gal}(\mathsf E/\mathsf F)$ and is called the Galois group of $\mathsf E/\mathsf F$.

In this paper, we will focus on cyclic Galois extensions, whose Galois group is a cyclic group of order $m$:
\begin{equation*}
    \operatorname{Gal}(\mathsf E/\mathsf F)= \{\theta^0,\theta^1,\dots,\theta^{m-1}\}
\end{equation*}
where the automorphism $\theta$ is the generator and $\theta^{i+1}=\theta\circ\theta^i$ for every $i\geq 0$. Notice that $\theta^m=\theta^0$ is the identity automorphism.

For example, for finite fields, when $\mathsf F=\mathbb F_q$ and $\mathsf E=\mathbb F_{q^m}$, the Galois group is cyclic of order $m$ with the generator automorphism $\theta(x)=x^q$:
\begin{align*}
    \operatorname{Gal}(\mathbb F_{q^m}/\mathbb F_q)
    =\left\{x, x^q, x^{q^2},\dots, x^{q^{m-1}}\right\}.
\end{align*}
For infinite fields, when $\mathsf F=\mathbb Q$ is the set of rational numbers and $\mathsf E=\mathbb Q(\zeta_n)$, where $\zeta_n$ is the $n$'th root of unity,
$\mathbb Q(\zeta_n)/\mathbb Q$ is a Galois extension of degree $\varphi(n)$,
where $\varphi(n)$ is the Euler's phi function
($\mathbb Q(\zeta_n)$ is called the $n$'th cyclotomic field and an interested reader can refer to~\cite{marcus1977number}).
Its Galois group is isomorphic to the multiplicative group $\mathbb Z^*_n$ of integers modulo~$n$.
Since~$\mathbb Z_n^*$ is cyclic for $n=p^a,2p^a$~\cite{Mathworld}, where~$p$ is any odd prime and~$a$ is any positive integer, it follows that for these values of~$n$ we have that~$\mathbb{Q}(\zeta_n)$ is a cyclic Galois extension of degree $m=\varphi(n)=p^{a-1}(p-1)$.
It is also possible to define cyclic extensions of $\mathbb Q$ for any degree~$m$ by considering subfields of $\mathbb Q(\zeta_p)$ for an odd prime $p$ such that $p-1$ is divisible by $m$.

\subsection{Rank metric codes}
A linear rank metric code, $[n,k,d]_{\mathsf E/\mathsf F}$, over a field extension $\mathsf E/\mathsf F$ is an $\mathsf E$--subspace $\mathcal C$ of $\mathsf E^n$ of dimension $k$ with the rank distance
\begin{align}\label{eq:rankdistance}
    d=d_R(\mathcal C)\triangleq \min_{0\neq \mathbf c\in\mathcal C}\dim_{\mathsf F}(\operatorname{span}_{\mathsf F}\{c_1,\dots,c_n\})
\end{align}
where $c_1,\dots,c_n\in\mathsf E$ represent the entries of $\mathbf c\in\mathsf E^n$.
By fixing an ordered basis of~$\mathsf E$ over~$\mathsf F$, the elements of $\mathsf E$ can be considered as vectors in $\mathsf F^m$, and then
the codewords (i.e. the elements of $\mathcal C\subset\mathsf E^n$) can be viewed as $m\times n$ matrices over $\mathsf F$. Then, this definition of the rank distance in (\ref{eq:rankdistance}) is equivalent to the minimum of the rank of the matrix representation of a nonzero codeword.

Notice that by definition in (\ref{eq:rankdistance}), the rank distance of $\mathcal C$ can be upper bounded by the Hamming distance,
$d_H(\mathcal C)\triangleq \min_{0\neq c\in\mathcal C}\|c\|_0$,
where $\|c\|_0$ is the number of nonzero entries of $c$.
Therefore, the Singleton bound can be written for the rank distance as well:
\begin{align}\label{equation:dRledH}
d_R(\mathcal C)\leq d_H(\mathcal C)\leq n-k+1.
\end{align}
The codes with $d_R(\mathcal C)=n-k+1$ are called maximum rank distance (MRD), for which we write $[n,k]_{\mathsf E/\mathsf F}$ by omitting $d$.
A generator matrix for an $[n,k,d]_{\mathsf E/\mathsf F}$ code $\mathcal C$ is a $k\times n$ matrix over $\mathsf E$ whose rows form a basis for $\mathcal C$.

\subsection{Gabidulin codes}
Gabidulin codes are defined as the row space of the $k\times n$ matrix
\begin{align}\label{eq:gabidulin}
    \begin{bmatrix}
        \theta^0(x_1) & \theta^0(x_2) & \cdots & \theta^0(x_n)\\
        \theta^1(x_1) & \theta^1(x_2) & \cdots & \theta^1(x_n)\\
        \vdots & \vdots & & \vdots\\
        \theta^{k-1}(x_1) & \theta^{k-1}(x_2) & \cdots & \theta^{k-1}(x_n)
    \end{bmatrix}\in\mathsf E^{k\times n}
\end{align}
where $\theta\in\operatorname{Aut}(\mathsf E/\mathsf F)$ and $x_1,\dots,x_n\in\mathsf E$ are $\mathsf F$--linearly independent (notice that this requires $n\leq m=\dim_{\mathsf F}\mathsf E$).
Note that Gabidulin codes can be seen as evaluation codes of the so-called~$\theta$--polynomials; a $\theta$--polynomial is a function~$f:\mathsf E\to\mathsf E$ of the form~$f(x)=\sum_i f_i \theta^i(x)$ for~$f_i\in\mathsf{E}$, and every codeword in a Gabidulin code is the evaluations of some $\theta$--polynomial of~$\theta$--degree at most~$k-1$.
Note also that the generator matrix can be chosen as the product of any $k\times k$ invertible matrix over $\mathsf E$ and the matrix in (\ref{eq:gabidulin}).

Originally, this was defined by Delsarte \cite{delsarte1978bilinear} and Gabidulin \cite{gabidulin1985theory} for the finite fields, when $\mathsf F=\mathbb F_q$, $\mathsf E=\mathbb F_{q^m}$, and $\theta(x)=x^q$, as the first general constructions of MRD codes over finite fields.
Later~\cite{augot2018generalized}, it was extended to fields of characteristic zero and it was shown that when $\mathsf E/\mathsf F$ is a cyclic Galois extension and~$\theta$ is the generator of $\operatorname{Gal}(\mathsf E/\mathsf F)$, this extension of Gabidulin codes also gives an $[n,k]_{\mathsf E/\mathsf F}$ MRD code \cite{augot2018generalized}.
In the rest of the paper, we will assume that $\mathsf E/\mathsf F$ is a cyclic Galois extension of order~$m$ and~$\mathsf{F}$ is of characteristic zero.

\subsection{Problem definition}
We consider the problem of finding an $[n,k]_{\mathsf E/\mathsf F}$ MRD code whose generator matrix $\mathbf G\in\mathsf E^{k\times n}$ has support constraints. 
We describe the support constraints through the subsets $\mathcal Z_1,\mathcal Z_2,\dots,\mathcal Z_k\subset[n]$ as
\begin{align}\label{eq:zeroconstraints}
    \mathbf G_{ij} = 0,\qquad \forall j\in\mathcal Z_i, i=1,2,\dots,k.
\end{align}

Over finite fields, this problem was studied in \cite{yildiz2019gabidulin} and it was shown that a necessary and sufficient condition for the existence of MRD codes under support constraints described by the $\mathcal Z_i$ is
\begin{align}\label{eq:ineqcond}\textstyle
    \left|\bigcap_{i\in\Omega}\mathcal Z_i\right|+|\Omega|\leq k,
    \quad\forall\varnothing\neq\Omega\subseteq[k].
\end{align}
The same condition also appears in the GM--MDS conjecture for MDS codes (i.e. $d_H=n-k+1$, see~\cite{dau2014existence}, and also~\cite{yan2013algorithms,halbawi2014distributed}) which was proven in \cite{yildiz2019optimum} and \cite{lovett2018mds}.

Over infinite fields, the fact that \eqref{eq:ineqcond} is necessary can be shown similar to~\cite{yildiz2019optimum}, since MRD codes are also MDS~\eqref{equation:dRledH}, and since the proof in~\cite{yildiz2019optimum} applies to both finite and infinite fields.
However, a similar proof to \cite{yildiz2019gabidulin} cannot be applied to show that (\ref{eq:ineqcond}) is sufficient when $\mathsf F$ has characteristic zero.
The reason is that in finite fields, since the generator matrix in~\eqref{eq:gabidulin} consists of entries in the form of polynomials in the $x_i$'s, which, in one step of the proof, allows to reduce the problem to a similar one with a smaller parameter, whereas in the characteristic zero, the entries are in the form of $\theta$--polynomials (defined in \cite{augot2018generalized}) and applying the same step turns the problem into one of a different kind.
Hence, in this paper,
we will show that (\ref{eq:ineqcond}) is sufficient for the existence of $[n,k]_{\mathsf E/\mathsf F}$ MRD codes under the support constraints on the generator matrix given in (\ref{eq:zeroconstraints}) when~$\mathsf F$ has characteristic zero.

\section{Main Results}\label{section:mainResults}
In this section, we present our main results on the existence of MRD codes in characteristic zero (see Theorem~\ref{thm:main}) and the best achievable rank distance for the cases where there does not exist any (see Corollary \ref{corollary}). Also, we will give a randomized algorithm for the code construction. The proofs of the theorems will be given in Section \ref{sec:proofs}.

\begin{theorem}\label{thm:main}
    Let $\mathsf E/\mathsf F$ be a cyclic Galois extension of degree $m$ such that $\mathsf F$ has characteristic zero.
    For some $k\leq n\leq m$, let $\mathcal Z_1,\dots,\mathcal Z_k\subset[n]$ satisfy (\ref{eq:ineqcond}).
    Then, there exists an $[n,k]_{\mathsf E/\mathsf F}$ Gabidulin code with a generator matrix satisfying the constraints in (\ref{eq:zeroconstraints}).
\end{theorem}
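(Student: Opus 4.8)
The plan is to reduce the existence question to a non-vanishing question for a certain polynomial in the evaluation points, and then use a Schwartz--Zippel-type argument adapted to automorphisms to guarantee a valid choice exists over the infinite field $\mathsf{F}$. Concretely, I would parametrize the $\theta$-polynomials row by row: the $i$-th row of the desired generator matrix is the evaluation vector of some $\theta$-polynomial $f_i$ of $\theta$-degree at most $k-1$, and the support constraint $\mathbf{G}_{ij}=0$ for $j\in\mathcal{Z}_i$ forces $f_i(x_j)=0$ for each $j\in\mathcal{Z}_i$. Since $|\mathcal{Z}_i|\le k-1$ (a consequence of \eqref{eq:ineqcond} applied to the singleton $\Omega=\{i\}$), each such $f_i$ is determined up to scalar by these root conditions, so the real freedom lies in the choice of the $\mathsf{F}$-linearly independent evaluation points $x_1,\dots,x_n\in\mathsf{E}$. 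The generator matrix built from these $f_i$ has full rank $k$ and gives an MRD code precisely when a suitable maximal-minor condition holds, and the whole construction succeeds exactly when a single determinantal expression in the $x_j$ is nonzero.

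Next I would follow the blueprint of the ``Simplified GM--MDS'' reformulation from \cite{yildiz2019optimum}: condition \eqref{eq:ineqcond} is equivalent to the existence of an assignment of the zero-sets to a combinatorial structure that lets one write down, for each row, an explicit $\theta$-polynomial whose prescribed roots are exactly the columns in $\mathcal{Z}_i$, while keeping enough independence across rows. The target is to show that the determinant of the resulting $k\times k$ system, viewed as a function of the evaluation points, is not identically zero. Here is where the characteristic-zero case genuinely departs from \cite{yildiz2019gabidulin}: over finite fields one can peel off a root and recurse, because the matrix entries are honest polynomials in the $x_i$; but the $\theta$-polynomial entries $\theta^\ell(x_j)$ are \emph{not} polynomials in $x_j$ in the ordinary sense, so that inductive reduction changes the problem type. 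I therefore would not attempt the recursion; instead I treat the determinant as an $\mathsf{F}$-multilinear expression in the conjugates $\theta^\ell(x_j)$ and argue its non-triviality directly.

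The crucial step is the automorphism-flavored Schwartz--Zippel lemma advertised in the introduction. The idea is that although $\mathsf{F}$ is infinite (so the classical lemma already gives non-vanishing once we know the determinant is a nonzero polynomial), the subtlety is that the entries involve the automorphisms $\theta^\ell$, and one wants to evaluate at a \emph{generic} $\mathsf{F}$-independent tuple $(x_1,\dots,x_n)$ in $\mathsf{E}$, where $\mathsf{E}$ is a fixed finite extension rather than a free polynomial ring. I would express each $x_j$ in a fixed $\mathsf{F}$-basis $\{b_1,\dots,b_m\}$ of $\mathsf{E}$ with indeterminate coefficients, so that $\theta^\ell(x_j)=\sum_t c_{j,t}\,\theta^\ell(b_t)$ becomes an $\mathsf{F}$-linear form in genuine variables $c_{j,t}$, turning the target determinant into an honest polynomial in the $c_{j,t}$ over $\mathsf{E}$. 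The variant lemma then says this polynomial is nonzero, so a random (or generic) choice of the $c_{j,t}$ in $\mathsf{F}$ makes the determinant nonzero, simultaneously yielding $\mathsf{F}$-linear independence of the $x_j$ with high probability; since $\mathsf{F}$ is infinite the success probability can be pushed arbitrarily close to one, which also gives the randomized algorithm.

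The main obstacle is precisely establishing that this determinant polynomial in the $c_{j,t}$ is not identically zero, i.e.\ that \eqref{eq:ineqcond} already forces a nonzero monomial to survive. This is the heart of the matter: I expect to isolate a single ``leading'' term of the determinant---chosen via the combinatorial assignment guaranteed by \eqref{eq:ineqcond}---and show it cannot be cancelled by any other term, using the $\mathsf{F}$-linear independence of the distinct conjugate products $\prod_j \theta^{\ell_j}(b_{t_j})$ in $\mathsf{E}$. The cancellation analysis is where the automorphism structure must be controlled carefully, because different permutations in the determinant expansion can produce matching conjugate patterns; ruling out accidental coincidences is the step that replaces the finite-field induction of \cite{yildiz2019gabidulin} and is where I anticipate the bulk of the technical work will lie.
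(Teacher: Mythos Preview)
Your overall architecture matches the paper almost exactly: build the generator matrix as $\mathbf{T}\cdot\mathbf{A}$ where the $i$-th row of $\mathbf{T}$ encodes a $\theta$-polynomial vanishing on $\{x_j:j\in\mathcal{Z}_i\}$, expand each $x_j$ in an $\mathsf{F}$-basis of $\mathsf{E}$ so that the target determinant becomes an honest polynomial in the basis coefficients, and then invoke an automorphism-adapted Schwartz--Zippel lemma (this is precisely the paper's Lemma~\ref{lemma:schwartz_zippel}). Where you diverge from the paper is at the step you correctly flag as the main obstacle: showing that this determinant polynomial is not identically zero.

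The paper does \emph{not} attack this by isolating a leading monomial and controlling cancellations among conjugate products. Instead it takes one further abstraction step: it regards the entries $\theta^{i-1}(x_j)$ as \emph{free} indeterminates $\mathbf{X}_{ij}$, so the target becomes a polynomial $f(\mathbf{X})$ with integer coefficients, and then it proves $f\not\equiv 0$ by specializing $\mathbf{X}$ to the \emph{Vandermonde} matrix $\mathbf{V}=[\alpha_j^{i-1}]$ in auxiliary scalar variables $\alpha_1,\dots,\alpha_n$. Under this substitution the $\theta$-Moore structure collapses to the ordinary Reed--Solomon structure, and $f(\mathbf{V})$ factors as $\det\mathbf{P}$ times nonzero Vandermonde factors, where $\mathbf{P}$ is exactly the matrix of the Simplified GM--MDS theorem (Lemma~\ref{lemma:gmmds}), already known to have nonzero determinant under~\eqref{eq:ineqcond}. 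So the nonvanishing is obtained by \emph{reducing to} the already-proved GM--MDS result, not by re-proving a $\theta$-analogue of it.

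Your proposed route---finding a surviving monomial in the $c_{j,t}$ and arguing via ``$\mathsf{F}$-linear independence of the distinct conjugate products $\prod_j\theta^{\ell_j}(b_{t_j})$''---is where the gap sits. Those products all live in the $m$-dimensional $\mathsf{F}$-space $\mathsf{E}$, while there are far more than $m$ of them, so they are certainly not linearly independent in general; you would need a much finer argument to rule out cancellation, and none is supplied. The Vandermonde specialization is precisely the device that makes this cancellation analysis unnecessary and is the key idea you are missing.
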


If the $\mathcal Z_i$ do not satisfy (\ref{eq:ineqcond}), then as given in \cite{yildiz2019gabidulin} and \cite{yildiz2019optimum}, $d_R\leq d_H\leq n+1-\max\limits_{\varnothing\neq\Omega\subseteq[k]}\left(\left|\bigcap_{i\in\Omega}\mathcal Z_i\right|+|\Omega|\right) < n-k+1$ and hence, an MRD code does not exist.
For this case, Corollary~\ref{corollary} below (which is the analog of \cite[Thm.~2]{yildiz2019gabidulin}) shows that this upper bound is achievable by the subcodes (i.e., the subspaces) of Gabidulin codes.

\begin{corollary}\label{corollary}
    In Theorem~\ref{thm:main}, if the $\mathcal Z_i$ do not satisfy (\ref{eq:ineqcond}), then there exists an $[n,k,n-\ell+1]_{\mathsf E/\mathsf F}$ subcode of an $[n,\ell]_{\mathsf E/\mathsf F}$ Gabidulin code, which satisfies (\ref{eq:zeroconstraints}), where
    \begin{align}\textstyle
        \ell = \max\limits_{\varnothing\neq\Omega\subseteq[k]}
        \left(\left|\bigcap_{i\in\Omega}\mathcal Z_i\right|+|\Omega|\right)
    \end{align}
\end{corollary}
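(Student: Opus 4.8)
The plan is to deduce Corollary~\ref{corollary} directly from Theorem~\ref{thm:main} via a padding argument that enlarges the list of constraint sets from $k$ to $\ell$. First I would record two preliminary observations. Since the $\mathcal Z_i$ fail to satisfy (\ref{eq:ineqcond}), the maximum defining $\ell$ strictly exceeds $k$, so $k < \ell$; and for any $k$-dimensional code satisfying (\ref{eq:zeroconstraints}) to exist at all, the rows indexed by an optimal $\Omega$ (which all vanish on $\bigcap_{i\in\Omega}\mathcal Z_i$) must remain independent, forcing the feasibility condition $\ell \leq n$. Thus an $[n,\ell]_{\mathsf E/\mathsf F}$ Gabidulin code is available as the ambient object.

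Next I would append $\ell-k$ empty constraint sets, setting $\mathcal Z_{k+1}=\cdots=\mathcal Z_\ell=\varnothing$, and verify that the enlarged list $\mathcal Z_1,\dots,\mathcal Z_\ell$ satisfies (\ref{eq:ineqcond}) with $k$ replaced by $\ell$. This is a two-case check: for $\varnothing\neq\Omega\subseteq[\ell]$, if $\Omega\subseteq[k]$ then $\left|\bigcap_{i\in\Omega}\mathcal Z_i\right|+|\Omega|\leq\ell$ holds directly by the definition of $\ell$ as the maximum over nonempty subsets of $[k]$; and if $\Omega$ contains some index $j>k$ then $\bigcap_{i\in\Omega}\mathcal Z_i\subseteq\mathcal Z_j=\varnothing$, so the left-hand side reduces to $|\Omega|\leq\ell$. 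Hence the enlarged list meets the hypotheses of Theorem~\ref{thm:main} at dimension $\ell$.

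I would then invoke Theorem~\ref{thm:main} on the enlarged list to obtain an $[n,\ell]_{\mathsf E/\mathsf F}$ Gabidulin code $\mathcal C$ with a generator matrix $\mathbf G\in\mathsf E^{\ell\times n}$ satisfying $\mathbf G_{ij}=0$ for all $j\in\mathcal Z_i$, $i\in[\ell]$. Taking the subcode $\mathcal C'$ spanned by the first $k$ rows $\mathbf G_{[k],:}$ produces a $k$-dimensional $\mathsf E$-subspace (the rows of a generator matrix are $\mathsf E$-independent) whose generator matrix obeys exactly the original constraints (\ref{eq:zeroconstraints}) for $\mathcal Z_1,\dots,\mathcal Z_k$, and which is by construction a subcode of the Gabidulin code $\mathcal C$.

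Finally I would pin down the rank distance of $\mathcal C'$ from two sides. Since $\mathcal C'\subseteq\mathcal C$ and the Gabidulin code $\mathcal C$ is MRD with $d_R(\mathcal C)=n-\ell+1$, restricting the minimization in (\ref{eq:rankdistance}) to the smaller set $\mathcal C'$ gives $d_R(\mathcal C')\geq n-\ell+1$. The matching upper bound is precisely the inequality recorded immediately before the corollary, namely $d_R(\mathcal C')\leq d_H\leq n+1-\ell$, which stems from the fact that the codewords supported on an optimal $\Omega$ are forced to vanish on $\bigcap_{i\in\Omega}\mathcal Z_i$. Combining the two bounds yields $d_R(\mathcal C')=n-\ell+1$, so $\mathcal C'$ is the desired $[n,k,n-\ell+1]_{\mathsf E/\mathsf F}$ subcode. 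The whole argument is essentially a reduction, so no step is truly hard; the only points demanding care are the case analysis establishing the enlarged condition and the feasibility observation $\ell\leq n$, everything else being a direct application of Theorem~\ref{thm:main} and the monotonicity of rank distance under passing to subcodes.
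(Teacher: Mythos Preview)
Your proposal is correct and follows essentially the same padding argument as the paper: extend $\mathcal Z_{k+1}=\cdots=\mathcal Z_\ell=\varnothing$, verify the enlarged system satisfies \eqref{eq:ineqcond} at dimension $\ell$, apply Theorem~\ref{thm:main}, and restrict to the first $k$ rows, bounding $d_R$ from below via the ambient Gabidulin code and from above via the discussion preceding the corollary. Your treatment is in fact slightly more careful than the paper's, since you spell out the two-case verification of \eqref{eq:ineqcond} and flag the implicit feasibility requirement $\ell\le n$ needed for the $[n,\ell]$ ambient code to exist.
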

\begin{proof}
    Define $\mathcal Z_{k+1}=\cdots=\mathcal Z_{\ell}=\varnothing$. Then, for any nonempty $\Omega\subseteq[\ell]$, we have that
    $\left|\bigcap_{i\in\Omega}\mathcal Z_i\right| + |\Omega|\leq\ell
    $. Hence, by Theorem~\ref{thm:main}, there exists an $[n,\ell,n-\ell+1]_{\mathsf E/\mathsf F}$ Gabidulin code with an $\ell\times n$ generator matrix $\mathbf G$ having zeros dictated by $\mathcal Z_1,\dots, \mathcal Z_{\ell}$. The first $k$ rows of $\mathbf G$ will generate a subcode whose rank distance $d_R$ is as good as the Gabidulin code:  $d_R\geq n-\ell+1$. Furthermore, $n-\ell+1$ is an upper bound on $d_H$ \cite{yildiz2019optimum}.
    Therefore, $n-\ell+1\leq d_R\leq d_H\leq n-\ell+1$.
    Hence, $d_R=n-\ell+1$.
\end{proof}

\subsection{Code Construction}
Fix an $\mathsf F$--basis $\{b_1,\dots,b_m\}$ for $\mathsf E$ and 
assume that the conditions for the $\mathcal Z_i$ in Theorem~\ref{thm:main} are satisfied, i.e. $\mathcal Z_1,\dots,\mathcal Z_k\subset[n]$ satisfy (\ref{eq:ineqcond}).
Then, each $\mathcal Z_i$ has at most $k-1$ elements by applying (\ref{eq:ineqcond}) with $|\Omega|=1$.
In \cite[Thm.~2]{dau2014existence} and \cite[Corollary~3]{yildiz2019gabidulin}, it is shown that one can keep adding elements to these sets from $[n]$ without violating any of the inequalities in (\ref{eq:ineqcond}) until each $\mathcal Z_i$ has exactly $k-1$ elements.
Note that adding elements to these sets will only put more zero constraints on the generator matrix.
Therefore, without loss of generality, we can assume that $|\mathcal Z_i|=k-1$ for all $i$ along with (\ref{eq:ineqcond}).
Then, we construct a generator matrix for a rank metric code in a randomized manner as described below:

\begin{tcolorbox}[colframe=black,colback=white, sharp corners,colbacktitle=white,coltitle=black,boxrule=0.45pt]
    \textbf{Inputs:} A finite nonempty set $S\subset\mathsf F$ and  subsets $\mathcal Z_1,\dots,\mathcal Z_k\subset[n]$ satisfying (\ref{eq:ineqcond}).
    
    \textbf{Steps:}
    \begin{itemize}
        \item Add elements to the $\mathcal Z_i$'s from $[n]$ (if necessary) by following the algorithm given in \cite[Thm.~2]{dau2014existence} so that they all have \emph{exactly} $k-1$ elements and \emph{still} satisfy (\ref{eq:ineqcond}).
        \item Choose $(\gamma_{ij})_{i\in[n],j\in[m]}$ uniformly at random from $S$.
        \item Let $x_i=\sum_{j=1}^m\gamma_{ij}b_j$ for $i\in[n]$.
        \item Construct $\mathbf A\in\mathsf E^{k\times n}$ as in (\ref{eq:gabidulin}) in terms of $x_1,\dots,x_n$.
        \item Define $\mathbf T\in\mathsf E^{k\times k}$ as
        \begin{align}\label{eq:matrixt}
            \mathbf T_{ij} = 
            \det\begin{bmatrix}\mathbf e_j&\mathbf A_{:,\mathcal Z_i}\end{bmatrix},\quad i,j\in[k]
        \end{align}
        where $\mathbf e_j$ is the column vector with $1$ at the $j$th entry and $0$'s elsewhere
        (Note that $|\mathcal Z_i|=k-1$).
    \end{itemize}

    \textbf{Output:} The generator matrix $\mathbf G=\mathbf T\cdot\mathbf A\in\mathsf E^{k\times n}$.
\end{tcolorbox}

By Lemma \ref{lemma:general} below, $\mathbf G$ in the above construction is guaranteed to satisfy (\ref{eq:zeroconstraints}) for any inputs.
\begin{lemma}\label{lemma:general}
    Let $\mathcal Z_1,\dots,\mathcal Z_k\subset[n]$ be subsets of size $k-1$.
    For a given $k\times n$ matrix $\mathbf A$, a $k\times k$ matrix $\mathbf T$ (over the same field as $\mathbf A$) satisfying $(\mathbf T\cdot\mathbf A)_{ij}=0$ for every $j\in\mathcal Z_i$ and $i\in[k]$ can be given as
    in (\ref{eq:matrixt}).
\end{lemma}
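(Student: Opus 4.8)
The plan is to verify directly that the matrix $\mathbf{T}$ defined in \eqref{eq:matrixt} satisfies the required annihilation property $(\mathbf{T}\cdot\mathbf{A})_{ij}=0$ for all $j\in\mathcal{Z}_i$. The key observation is that the entries $\mathbf{T}_{ij}=\det[\mathbf{e}_j\ \mathbf{A}_{:,\mathcal{Z}_i}]$ are, up to sign, the cofactors obtained by expanding a determinant along its first column. So the strategy is to recognize the row-times-column product $(\mathbf{T}\cdot\mathbf{A})_{i\ell}=\sum_{j=1}^k \mathbf{T}_{ij}\mathbf{A}_{j\ell}$ as itself the cofactor expansion of a single $k\times k$ determinant, and then show that determinant vanishes whenever $\ell\in\mathcal{Z}_i$.

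**First** I would fix a row index $i$ and a column index $\ell\in\mathcal{Z}_i$, and write out the target sum $\sum_{j=1}^k \mathbf{T}_{ij}\mathbf{A}_{j\ell}$. Substituting the definition of $\mathbf{T}_{ij}$ and using cofactor (Laplace) expansion along the first column, I would identify this sum with $\det\bigl[\mathbf{A}_{:,\ell}\ \ \mathbf{A}_{:,\mathcal{Z}_i}\bigr]$, the determinant of the $k\times k$ matrix whose first column is the $\ell$-th column of $\mathbf{A}$ and whose remaining $k-1$ columns are the columns of $\mathbf{A}$ indexed by $\mathcal{Z}_i$ (recall $|\mathcal{Z}_i|=k-1$, so this is exactly $k$ columns). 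The precise bookkeeping is that replacing the symbolic column $\mathbf{e}_j$ by the actual entries $\mathbf{A}_{j\ell}$ and summing over $j$ reconstitutes the full determinant with $\mathbf{A}_{:,\ell}$ in the first slot; this is just linearity of the determinant in its first column combined with the cofactor formula.

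**Then** the conclusion is immediate: if $\ell\in\mathcal{Z}_i$, the column $\mathbf{A}_{:,\ell}$ already appears among the columns $\mathbf{A}_{:,\mathcal{Z}_i}$, so the matrix $[\mathbf{A}_{:,\ell}\ \ \mathbf{A}_{:,\mathcal{Z}_i}]$ has a repeated column and its determinant is zero. Hence $(\mathbf{T}\cdot\mathbf{A})_{i\ell}=0$ for every $j\in\mathcal{Z}_i$, which is precisely the claimed support constraint. I would note that the argument is purely formal and holds over any commutative ring, in particular over $\mathsf{E}$, so no property of the field characteristic is used here.

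**The main obstacle** is not conceptual but notational: getting the sign conventions and the correspondence between the cofactor expansion of $\mathbf{T}_{ij}$ and the assembled determinant exactly right, so that the coefficients $\mathbf{A}_{j\ell}$ slot in without stray signs. Once the identity $(\mathbf{T}\cdot\mathbf{A})_{i\ell}=\det[\mathbf{A}_{:,\ell}\ \ \mathbf{A}_{:,\mathcal{Z}_i}]$ is established, the vanishing is a one-line consequence of the repeated-column rule. I emphasize that this lemma only guarantees the \emph{support} structure of $\mathbf{G}=\mathbf{T}\cdot\mathbf{A}$; it says nothing about $\mathbf{G}$ being a valid (rank-$k$, MRD) generator matrix, which is exactly the content reserved for the randomized analysis and the Schwartz--Zippel argument elsewhere in the paper.
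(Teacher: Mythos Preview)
Your proposal is correct and is essentially the same argument as the paper's, just packaged without the adjugate formalism: the paper sets $\mathbf P=[\,0\ \mathbf A_{:,\mathcal Z_i}\,]$, observes that $\operatorname{adj}(\mathbf P)\mathbf P=\det(\mathbf P)\mathbf I=0$, and identifies the first row of $\operatorname{adj}\mathbf P$ with $\mathbf T_{i,:}$, whereas you go straight to the multilinearity identity $(\mathbf T\mathbf A)_{i\ell}=\det[\mathbf A_{:,\ell}\ \mathbf A_{:,\mathcal Z_i}]$ and invoke the repeated-column rule. Both routes are the same cofactor computation, and your version is arguably the cleaner of the two.
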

\begin{proof}
    For a fixed~$i\in[k]$, the statement $(\mathbf T\cdot \mathbf A)_{ij}=0$ for every $j\in\mathcal Z_i$ is equivalent to the equation $\mathbf T_{i,:}\cdot\mathbf A_{:,\mathcal Z_i}=0$.
    A solution $\mathbf T_{i,:}$ to this equation can be described in terms of the adjugate of the $k\times k$ square matrix $\mathbf P=\begin{bmatrix} 0_{k\times 1}& \mathbf A_{:,\mathcal Z_i}\end{bmatrix}$.
    Recall that $\operatorname{adj}\mathbf P$ is the transpose of the cofactor matrix $\left[(-1)^{i+j}\det (\mathbf P_{[k]\backslash\{i\},[k]\backslash\{j\}})\right]_{i,j\in[k]}$ and satisfies
    $\operatorname{adj}(\mathbf P)\mathbf P=\det(\mathbf P)\mathbf I_{k\times k}$.
    Since $\mathbf P$ has an all zero column, we have $\det\mathbf P=0$, which implies $\operatorname{adj}(\mathbf P)\mathbf P=0$.
    Furthermore, due to the zero column in $\mathbf P$, the entries of $\operatorname{adj}\mathbf P$ are zero except the first row, whose entries are for $j\in[k]$,
    \begin{align*}
        (\operatorname{adj}\mathbf P)_{1,j} &= (-1)^{j+1}\det (\mathbf P_{[k]\backslash\{j\},[k]\backslash\{1\}})\\
        &= (-1)^{j+1}\det (\mathbf A_{[k]\backslash\{j\},\mathcal Z_i})\\
        &= \det\begin{bmatrix}\mathbf e_j & \mathbf A_{:,\mathcal Z_i}\end{bmatrix}=\mathbf{T}_{i,j}.
    \end{align*}
    Since $(\operatorname{adj}\mathbf P)_{1,:}\cdot\mathbf P=0$ and $(\operatorname{adj}\mathbf P)_{1,:}\cdot\mathbf A_{:,\mathcal Z_i}=0$, the row vector $\mathbf T_{i,:}=(\operatorname{adj}\mathbf P)_{1,:}$ satisfies $\mathbf T_{i,:}\cdot \mathbf A_{:,\mathcal Z_i}=0$.
\end{proof}

Furthermore, if $x_1,\dots,x_n$ are $\mathsf F$--linearly independent and the matrix $\mathbf T$ is invertible (i.e. $\det\mathbf T \neq 0)$,
then the code generated by $\mathbf G$ is an $[n,k]_{\mathsf E/\mathsf F}$ Gabidulin code since the row spaces of $\mathbf A$ and $\mathbf G=\mathbf T\cdot\mathbf A$ are identical.
In Theorem~\ref{thm:construction}, we give a lower bound on the probability of this construction giving an MRD code.

\begin{theorem}\label{thm:construction}
    If the conditions in Theorem~\ref{thm:main} are satisfied,
    then, the generator matrix $\mathbf G$ randomly constructed as described above will satisfy (\ref{eq:zeroconstraints}) and generate an $[n,k]_{\mathsf E/\mathsf F}$ Gabidulin code  with probability at least $1-\frac{n+k(k-1)}{|S|}$.
\end{theorem}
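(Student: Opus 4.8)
The plan is to reduce the claim to a single application of the Schwartz--Zippel lemma, after isolating the two algebraic conditions that make $\mathbf G=\mathbf T\cdot\mathbf A$ a generator matrix of a genuine Gabidulin code. By Lemma~\ref{lemma:general} the support constraints (\ref{eq:zeroconstraints}) hold for \emph{every} choice of the $\gamma_{ij}$, so the only randomness to control is whether the construction yields an $[n,k]_{\mathsf E/\mathsf F}$ Gabidulin code. As noted after Lemma~\ref{lemma:general}, this is guaranteed once (i) $x_1,\dots,x_n$ are $\mathsf F$--linearly independent, and (ii) $\det\mathbf T\neq 0$: condition (i) makes the row space of $\mathbf A$ a Gabidulin code, and (ii) makes $\mathbf G$ and $\mathbf A$ share that row space.

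Next I would encode both conditions as nonvanishing of explicit polynomials in the $\gamma_{ij}$. For (i), let $\mathbf M=(\theta^{i-1}(x_j))_{i,j\in[n]}\in\mathsf E^{n\times n}$ be the full Moore matrix, so that $\mathbf A=\mathbf M_{[k],:}$; the $x_i$ are $\mathsf F$--linearly independent iff $\det\mathbf M\neq 0$. Since $\theta^{i-1}(x_j)=\sum_{l}\gamma_{jl}\,\theta^{i-1}(b_l)$ is $\mathsf F$--linear in $\gamma_{j,:}$ (as $\theta$ fixes $\mathsf F$) and column $j$ of $\mathbf M$ depends only on $\gamma_{j,:}$, the determinant $\det\mathbf M$ is an $\mathsf E$--coefficient polynomial in the $\gamma_{ij}$ of total degree at most $n$. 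For (ii), each entry $\mathbf T_{ij}$ is, up to sign, the $(k-1)\times(k-1)$ minor $\det(\mathbf A_{[k]\setminus\{j\},\mathcal Z_i})$, hence a polynomial of degree at most $k-1$; therefore $\det\mathbf T$ has degree at most $k(k-1)$, and $F\triangleq\det\mathbf M\cdot\det\mathbf T$ has degree at most $n+k(k-1)$.

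The crux is to verify that $F$ is not the zero polynomial, which is where I expect the main difficulty and where I would invoke the preceding results. That $\det\mathbf M\not\equiv 0$ is immediate from the Artin/Dedekind independence of the automorphisms $\theta^0,\dots,\theta^{n-1}$ (equivalently, the choice $x_i=b_i$ gives $\det\mathbf M\neq 0$). To see $\det\mathbf T\not\equiv 0$, I would use Theorem~\ref{thm:main}: it furnishes $\mathsf F$--linearly independent $x_1^*,\dots,x_n^*$ whose Gabidulin code admits a generator matrix with the prescribed zeros. For such $x^*$, every $\mathbf A^*_{:,\mathcal Z_i}$ has full column rank $k-1$, since its top $(k-1)\times(k-1)$ block is a Moore matrix of the independent elements $\{x_j^*\}_{j\in\mathcal Z_i}$; hence its one--dimensional left null space is spanned by $\mathbf T^*_{i,:}$. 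Any codeword vanishing on $\mathcal Z_i$ is thus a scalar multiple of $\mathbf T^*_{i,:}\mathbf A^*$, so a generator matrix with the prescribed zeros exists iff these $k$ rows are independent, i.e. iff $\det\mathbf T^*\neq 0$ (using that $\mathbf A^*$ has full row rank). The code guaranteed by Theorem~\ref{thm:main} therefore forces $\det\mathbf T^*\neq 0$, whence $\det\mathbf T\not\equiv 0$ and $F\not\equiv 0$.

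Finally I would apply the Schwartz--Zippel lemma over the field $\mathsf E$ to the nonzero polynomial $F$, with the $\gamma_{ij}$ drawn independently and uniformly from $S\subseteq\mathsf F\subseteq\mathsf E$, obtaining $\Pr[F=0]\le \deg F/|S|\le (n+k(k-1))/|S|$. Since $F\neq 0$ implies both $\det\mathbf M\neq 0$ and $\det\mathbf T\neq 0$, the construction succeeds with probability at least $1-\frac{n+k(k-1)}{|S|}$, as claimed. The only delicate step is the passage from the abstract existence in Theorem~\ref{thm:main} to nonvanishing of the specific determinant $\det\mathbf T$, handled above via the uniqueness of the null vector; the degree bookkeeping and the Schwartz--Zippel estimate are then routine.
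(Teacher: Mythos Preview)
Your reduction to a single Schwartz--Zippel application is exactly right, and your degree count $\deg F\le n+k(k-1)$ matches the paper's. The fatal issue is circularity: you invoke Theorem~\ref{thm:main} to produce the witness $x^*$ showing $\det\mathbf T\not\equiv 0$, but in this paper Theorem~\ref{thm:main} has no independent proof---it is derived \emph{from} Theorem~\ref{thm:construction} by taking $|S|$ large enough (see the first sentence of Section~\ref{sec:proofs}). So your argument assumes what it is meant to establish.

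The paper breaks this circle by lifting the problem one level of abstraction. Rather than viewing $\det\mathbf T\cdot\det\mathbf M_{[n],:}$ as a polynomial in the $\gamma_{ij}$, it introduces a polynomial $f(\mathbf X)$ in \emph{formal matrix variables} $\mathbf X\in\mathsf E^{m\times n}$, so that $f(\mathbf M)$ is the quantity of interest, and then applies the automorphism-adapted Schwartz--Zippel of Lemma~\ref{lemma:schwartz_zippel}. The payoff is that to certify $f\not\equiv 0$ one may evaluate at \emph{any} matrix, not just a Moore matrix arising from some choice of $\gamma$. The paper evaluates at the Vandermonde matrix $\mathbf V=[\alpha_j^{i-1}]$, where the resulting $k\times k$ matrix becomes (up to nonzero row scalars $c_i$) exactly the matrix $\mathbf P$ of the simplified GM--MDS theorem (Lemma~\ref{lemma:gmmds}); that theorem, proved in \cite{yildiz2019optimum,lovett2018mds} over arbitrary fields, then gives $\det\mathbf P\not\equiv 0$ directly from the hypothesis~(\ref{eq:ineqcond}). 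This Vandermonde-evaluation trick is the missing idea in your proposal: it replaces the appeal to Theorem~\ref{thm:main} by an appeal to the already-established GM--MDS result, and is what makes the argument non-circular. Your null-space uniqueness argument for $\det\mathbf T^*\neq 0$ is correct \emph{given} a witness, but the witness has to come from somewhere else.
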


Since $\mathsf F$ is infinite, $S$ can be arbitrarily large. Therefore, the probability of constructing an MRD code can be arbitrarily close to~$1$.

Furthermore, if the $\mathcal Z_i$ do not satisfy (\ref{eq:ineqcond}), then by following the proof of Corollary \ref{corollary}, we can construct a rank metric code achieving the largest possible rank distance for the given support constraints.

\section{More on Cyclic Galois Extensions}\label{section:MoreOn}
Before moving to the proofs of the theorems, in this section, we will give some useful properties of the automorphisms in $\operatorname{Gal}(\mathsf E/\mathsf F)=\{\theta^0,\theta^1,\dots,\theta^{m-1}\}$.

\subsection{Linear independence of the elements in $\mathsf E$}
Lemma \ref{lemma:lindep} below lists some equivalent conditions to the $\mathsf F$--linear dependence of the elements of $\mathsf E$ in terms of the automorphisms in $\operatorname{Gal}(\mathsf E/\mathsf F)$.
The first two of these conditions can be also seen as a special case of \cite[Prop.~5]{augot2018generalized}, where the authors give equivalent rank metrics for the elements of $\mathsf E^n$, whereas Lemma \ref{lemma:lindep} only claims these rank metrics simultaneously declare rank deficiency (i.e. returns a rank less than~$n$) for a given element of $\mathsf E^n$.
It is worth noting,
as shown by Augot \textit{et al.} \cite{augot2018generalized}, that the assumption that the extension $\mathsf E/\mathsf F$ is cyclic plays an important role in Lemma~\ref{lemma:lindep}. This is since its proof relies on the fact that $\theta$ fixes \emph{only} the elements of $\mathsf F$ (i.e. for any $x\in\mathsf E$, $\theta(x)=x$ if and only if $x\in\mathsf F$),
which is the case for the cyclic extensions.

\begin{lemma}\label{lemma:lindep}
Let $n\leq m=\dim_{\mathsf F}\mathsf E$, $x_1,\dots,x_n\in\mathsf E$, and
\begin{align}\label{eq:matrixm}
    \mathbf M=\begin{bmatrix}
        \theta^0(x_1) & \theta^0(x_2) & \cdots & \theta^0(x_n)\\
        \theta^1(x_1) & \theta^1(x_2) & \cdots & \theta^1(x_n)\\
        \vdots & \vdots & & \vdots\\
        \theta^{m-1}(x_1) & \theta^{m-1}(x_2) & \cdots & \theta^{m-1}(x_n)
    \end{bmatrix}\in\mathsf E^{m\times n}
\end{align}
Then, the following are equivalent:
\begin{enumerate}[(i)]
    \item $x_1,\dots,x_n$ are $\mathsf F$--linearly dependent.
    \item The columns of $\mathbf M$ are $\mathsf E$--linearly dependent.
    \item The top $n\times n$ minor of $\mathbf M$
    is zero, i.e. $\det\mathbf M_{[n],[n]}=0$.
\end{enumerate}
\end{lemma}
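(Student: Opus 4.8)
The plan is to establish the cycle of implications (i)$\Rightarrow$(ii)$\Rightarrow$(iii)$\Rightarrow$(i), where the first two steps are routine and essentially all of the content sits in the last one. For (i)$\Rightarrow$(ii) I would start from an $\mathsf F$--linear relation $\sum_{j} a_j x_j=0$ with $a_j\in\mathsf F$ not all zero, and apply each automorphism $\theta^i$. Since $\theta^i$ fixes $\mathsf F$ pointwise, this yields $\sum_j a_j\theta^i(x_j)=0$ for every $i\in\{0,\dots,m-1\}$, i.e.\ the same relation $\sum_j a_j\mathbf M_{:,j}=0$ among the columns of $\mathbf M$, with coefficients in $\mathsf F\subseteq\mathsf E$; hence the columns are $\mathsf E$--linearly dependent. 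Note this is exactly where the cyclic Galois hypothesis $\theta^i|_{\mathsf F}=\mathrm{id}$ enters. For (ii)$\Rightarrow$(iii), $\mathsf E$--linear dependence of the $n$ columns means $\operatorname{rank}_{\mathsf E}\mathbf M<n$, so every $n\times n$ minor of $\mathbf M$ vanishes; in particular the top one, $\det\mathbf M_{[n],[n]}=0$.

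The hard part is (iii)$\Rightarrow$(i), which I would approach through the left kernel of the square matrix $\mathbf M_{[n],[n]}$. Vanishing of its determinant produces coefficients $f_0,\dots,f_{n-1}\in\mathsf E$, not all zero, with $\sum_{i=0}^{n-1} f_i\theta^i(x_j)=0$ for all $j\in[n]$; equivalently, the nonzero $\theta$--polynomial $f(x)=\sum_{i=0}^{n-1} f_i\theta^i(x)$, of $\theta$--degree at most $n-1$, vanishes at every $x_j$. Because each $\theta^i$ is $\mathsf F$--linear, the map $f:\mathsf E\to\mathsf E$ is $\mathsf F$--linear, so its root set $\ker f$ is an $\mathsf F$--subspace of $\mathsf E$ that contains $x_1,\dots,x_n$.

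To finish I would invoke the fundamental bound on roots of $\theta$--polynomials established in \cite{augot2018generalized}: a nonzero $\theta$--polynomial of $\theta$--degree $d$ has an $\mathsf F$--root space of dimension at most $d$. Applied to $f$ with $d\le n-1$, this gives $\dim_{\mathsf F}\ker f\le n-1<n$, so the $n$ elements $x_1,\dots,x_n$ lying in $\ker f$ must be $\mathsf F$--linearly dependent, which is (i). I expect this bound to be the main obstacle, and it is precisely where the cyclic assumption is indispensable: the degree--one factor $\delta_y(x)=\theta(x)-\tfrac{\theta(y)}{y}x$ has kernel exactly $\mathsf F y$ for $y\neq0$ only because $\theta$ fixes no element outside $\mathsf F$. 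If a self-contained argument is preferred over citing the bound, I would instead prove (iii)$\Rightarrow$(i) by induction on $n$: using (right) division in the skew-polynomial ring $\mathsf E[\theta]$, factor $f=g\circ\delta_{x_n}$ (legitimate since $f(x_n)=0$), check that $\delta_{x_n}(x_1),\dots,\delta_{x_n}(x_{n-1})$ are $\mathsf F$--linearly independent roots of the $\theta$--polynomial $g$ of $\theta$--degree at most $n-2$, and conclude $g=0$ from the induction hypothesis, whence $f=0$, contradicting that $f$ is nonzero.
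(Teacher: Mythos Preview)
Your argument is correct, but it runs the cycle in the opposite direction from the paper and places the weight on a different implication. The paper proves $(ii)\Rightarrow(i)$, $(iii)\Rightarrow(ii)$, and $(i)\Rightarrow(iii)$. Its nontrivial step is $(ii)\Rightarrow(i)$: take a \emph{minimal} $\mathsf E$--dependent set of columns, so the coefficients $\beta_i$ in $\mathbf M_{:,\ell}=\sum_i\beta_i\mathbf M_{:,i}$ are unique; applying $\theta$ rowwise (and using $\theta^m=\theta^0$) produces the same relation with coefficients $\theta(\beta_i)$, and uniqueness forces $\theta(\beta_i)=\beta_i$, hence $\beta_i\in\mathsf F$. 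The step $(iii)\Rightarrow(ii)$ is a short row-span propagation under $\theta$, and $(i)\Rightarrow(iii)$ is the trivial direction. By contrast, you make $(iii)\Rightarrow(i)$ the hard step and handle it via the root-space bound for $\theta$--polynomials (or, equivalently, an induction using right division by $\delta_{x_n}$ in $\mathsf E[\theta]$). Both approaches ultimately hinge on the same fact---that the fixed field of $\theta$ is exactly $\mathsf F$---but the paper's minimality/uniqueness trick is entirely self-contained, whereas your route either imports the bound from \cite{augot2018generalized} or rebuilds the skew-polynomial division machinery.

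One small remark on your commentary: in $(i)\Rightarrow(ii)$ you write that ``this is exactly where the cyclic Galois hypothesis $\theta^i|_{\mathsf F}=\mathrm{id}$ enters.'' The identity $\theta^i|_{\mathsf F}=\mathrm{id}$ is just the definition of an $\mathsf E/\mathsf F$--automorphism and holds for any Galois extension; the genuinely \emph{cyclic} input---that $\theta$ fixes nothing outside $\mathsf F$---is used only in your $(iii)\Rightarrow(i)$ step (as you correctly note later), and in the paper's $(ii)\Rightarrow(i)$.
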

\begin{proof} If $x_i=0$ for some~$i$ then the claim is trivial, and hence assume that $x_i\neq 0$ for every~$i$.

    $(ii)\implies(i)$: Let $\ell$ be the minimum number of columns of $\mathbf{M} $ that are $\mathsf E$--linearly dependent and w.l.o.g. assume that
    \begin{align*}
        \mathbf M_{:,\ell}=\sum_{i=1}^{\ell-1}\beta_i\mathbf M_{:,i}
    \end{align*}
    for some unique $\beta_1,\dots,\beta_{\ell-1}\in\mathsf E$, which implies that~$\theta^{j-1}(x_\ell)=\sum_{i=1}^{\ell-1}\beta_i\theta^{j-1}(x_i)$ for every~$j\in[m]$.
    Then, applying~$\theta$ to both sides gives
    $\theta^{j}(x_{\ell})=\sum_{i=1}^{\ell}\theta(\beta_i)\theta^{j}(x_i)$,
    which implies $\mathbf M_{:,\ell}=\sum_{i=1}^{\ell-1}\theta(\beta_i)\mathbf M_{:,i}$ as $\theta^m=\theta^0$.
    Since the $\beta_i$'s are unique it follows that $\theta(\beta_i)=\beta_i$, which implies $\beta_i\in\mathsf F$.
    Since $\theta^0(x)=x$, we have $x_{\ell}=\sum_{i=1}^{\ell}\beta_ix_i$ for $\beta_i\in\mathsf F$.
    
    $(iii)\implies (ii)$: If the top $n\times n$ minor of $\mathbf M$ is zero, then there exists $\ell\leq n$ such that the $\ell$'th row of $\mathbf M$ is in the $\mathsf E$--span of the first $\ell-1$ rows. By induction, it can be shown that for any $i\geq\ell$, the $i$'th row is in the span of the first $\ell-1$ rows. To see how, assume for some $\beta_1,\dots,\beta_{\ell-1}\in\mathsf E$, $\theta^{i-1}(x_j)=\sum_{t=1}^{\ell-1}\beta_t\theta^{t-1}(x_j)$ for all $j$. Then, by applying $\theta$ to both sides, it follows that the $(i+1)$'th row is a linear combination of the first $\ell$ rows; hence it is also in the span of the first $\ell-1$ rows. As a result, $\operatorname{rank}\mathbf M\leq\ell-1< n$, which implies $(ii)$.
    
    $(i)\implies (iii)$: Assume that $\sum_{i=1}^n\beta_ix_i=0$ for some $\beta_i\in\mathsf F$. Then, for any $j$, applying $\theta^j$ to both sides yields $\sum_{i=1}^n\beta_i\theta^j(x_i)=0$ since $\theta^j(\beta_i)=\beta_i$, which implies $(iii)$.
\end{proof}

\subsection{Schwartz--Zippel Lemma for automorphisms}
Recall the Schwartz--Zippel Lemma, which states that 
for a nonzero multivariate polynomial $f$ in $n$ variables over a field, a point uniformly chosen at random from $S^n$, where $S$ is a nonempty finite subset of this field, will be a root of $f$ with probability at most $\frac{\deg f}{|S|}$.
In this section, we will give an extension of Schwartz--Zippel Lemma for a special type of functions from $\mathsf E^n$ to $\mathsf E$.
More precisely, for a given multivariate polynomial $f$ over $\mathsf E$ in $mn$ variables (seen as an $m\times n$ matrix), we will consider the function $g(x_1,\dots,x_n)=f([\theta^{i-1}(x_j)]_{i\in[m],j\in[n]})$ and give a bound on the probability of a randomly chosen point being a zero of $g$.
Later, this will help us to derive the bound on the probability given in Theorem~\ref{thm:construction}.

\begin{lemma}\label{lemma:schwartz_zippel}
    Let $\{b_1,\dots,b_m\}$ be an $\mathsf F$--basis for $\mathsf E$.
    Let $f$ be a nonzero multivariate polynomial over $\mathsf E$ in $mn$ variables.
    Let $\mathbf M\in\mathsf E^{m\times n}$ be defined as in (\ref{eq:matrixm}) for $x_j=\sum_{i=1}^m\mathbf\Gamma_{ij}b_i$, where the $\mathbf\Gamma_{ij}$ are independently uniformly chosen at random from a finite nonempty subset $S\subset\mathsf F$.
    Then,
    \begin{align*}
        \mathbb P(f(\mathbf M)= 0) \leq \frac{\deg f}{|S|}.
    \end{align*}
\end{lemma}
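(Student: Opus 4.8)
The plan is to reduce this to the ordinary Schwartz--Zippel lemma by viewing $f(\mathbf M)$ as a genuine polynomial in the $mn$ random scalars $\mathbf\Gamma_{tj}$, rather than in the $x_j$. First I would record the structural fact that makes this possible: since each $\theta^{i-1}\in\operatorname{Gal}(\mathsf E/\mathsf F)$ is $\mathsf F$--linear and fixes $\mathsf F$ pointwise, and since the $\mathbf\Gamma_{tj}$ lie in $\mathsf F$, we have
\begin{align*}
    \mathbf M_{ij}=\theta^{i-1}(x_j)=\theta^{i-1}\!\left(\sum_{t=1}^m\mathbf\Gamma_{tj}b_t\right)=\sum_{t=1}^m\mathbf\Gamma_{tj}\,\theta^{i-1}(b_t).
\end{align*}
Thus every entry of $\mathbf M$ is a homogeneous linear form in the $\mathbf\Gamma_{tj}$ with coefficients in $\mathsf E$. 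Substituting these linear forms into $f$ yields a polynomial $g\bigl((\mathbf\Gamma_{tj})\bigr)$ over $\mathsf E$ in the $mn$ variables $\mathbf\Gamma_{tj}$, and since substituting degree--one forms into a polynomial of total degree $\deg f$ cannot raise the total degree, we have $\deg g\le\deg f$.

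The crux is to argue that $g$ is not the zero polynomial. For each fixed column index $j$, the passage from the formal entries $(\mathbf M_{1j},\dots,\mathbf M_{mj})$ to the variables $(\mathbf\Gamma_{1j},\dots,\mathbf\Gamma_{mj})$ is governed by the single $m\times m$ matrix $\mathbf B=[\theta^{i-1}(b_t)]_{i,t\in[m]}$, which is exactly the matrix of the form~\eqref{eq:matrixm} built from $x_t=b_t$. Because $b_1,\dots,b_m$ form an $\mathsf F$--basis, they are in particular $\mathsf F$--linearly independent, so Lemma~\ref{lemma:lindep} (equivalence of $(i)$ and $(iii)$) gives $\det\mathbf B\neq 0$. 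Hence the substitution $(\mathbf M_{ij})\mapsto(\mathbf\Gamma_{tj})$ is a block-diagonal invertible $\mathsf E$--linear change of variables, consisting of one copy of $\mathbf B$ per column $j$. An invertible linear change of variables maps nonzero polynomials to nonzero polynomials, so $g\neq 0$ whenever $f\neq 0$.

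With $g$ a nonzero polynomial over $\mathsf E$ of total degree at most $\deg f$, the result then follows by applying the classical Schwartz--Zippel lemma directly to $g$: the $\mathbf\Gamma_{tj}$ are chosen independently and uniformly from the finite nonempty set $S$, which, although contained in $\mathsf F$, is also a subset of $\mathsf E$, so the lemma applies over $\mathsf E$ and yields
\begin{align*}
    \mathbb P\bigl(f(\mathbf M)=0\bigr)=\mathbb P\bigl(g((\mathbf\Gamma_{tj}))=0\bigr)\le\frac{\deg g}{|S|}\le\frac{\deg f}{|S|}.
\end{align*}
I expect the only genuine obstacle to be the nonvanishing of $g$; everything else is bookkeeping. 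This step is precisely where the cyclic Galois hypothesis enters, through the invertibility of $\mathbf B$ furnished by Lemma~\ref{lemma:lindep}, which in turn relies on $\theta$ fixing only the elements of $\mathsf F$.
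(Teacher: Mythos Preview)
Your proof is correct and follows essentially the same approach as the paper: both define the auxiliary polynomial by composing $f$ with the linear substitution $\mathbf X\mapsto\mathbf B\mathbf X$ (your $g$ is the paper's $f'$), invoke Lemma~\ref{lemma:lindep} on the basis $b_1,\dots,b_m$ to deduce that $\mathbf B$ is invertible and hence that the auxiliary polynomial is nonzero, and then apply the classical Schwartz--Zippel lemma. The only cosmetic difference is that the paper records the equality $\deg f'=\deg f$ via the inverse change of variables, whereas you note only the inequality $\deg g\le\deg f$, which is all that is needed.
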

\begin{proof}
    Define another polynomial $f'$ as
    $
        f'(\mathbf X) = f(\mathbf B\mathbf X)
    $
    in the variables $\mathbf X_{ij}$, $i\in[m],j\in[n]$,
    where $\mathbf B=[\theta^{i-1}(b_j)]_{i,j\in[m]}$ is an 
    ${m\times m}$ matrix defined as in (\ref{eq:matrixm}) for $b_1,\dots,b_m$.
    Since $\{b_1,\dots,b_m\}$ is an $\mathsf F$--basis, the $b_i$ are $\mathsf F$--linearly independent and
    by Lemma \ref{lemma:lindep}, $\mathbf B$ is invertible.
    Then, $f$ can be also written as
    $
        f(\mathbf X) = f'(\mathbf B^{-1}\mathbf X)
    $.
    Hence, $f'$ is also nonzero and $\deg f=\deg f'$.
    Furthermore,
    $
        f'(\mathbf\Gamma)
        =f(\mathbf B\mathbf\Gamma)
        =f(\mathbf M)
    $
    since 
    \begin{align*}
        \mathbf M_{ij}
        &= \theta^{i-1}(x_j)\\
        &= \theta^{i-1}\left(\textstyle\sum_{t=1}^mb_t\mathbf\Gamma_{tj}\right)\\
        &= \textstyle\sum_{t=1}^m\theta^{i-1}(b_t)\mathbf\Gamma_{tj}\\
        &= (\mathbf B\mathbf\Gamma)_{ij}
    \end{align*}
    where we use $\theta^{i-1}(\mathbf\Gamma_{tj})=\mathbf\Gamma_{tj}$ since $\mathbf\Gamma_{tj}\in\mathsf F$.
    Now, applying the Schwartz--Zippel Lemma to the polynomial $f'$ gives $\mathbb P(f'(\mathbf\Gamma)=0)\leq\frac{\deg f'}{|S|}$. Hence, $\mathbb P(f(\mathbf M)=0)\leq\frac{\deg f}{|S|}$.
\end{proof}

\section{Proofs of Theorem~\ref{thm:main} and Theorem~\ref{thm:construction}}\label{sec:proofs}
First of all, notice that it is sufficient to prove Theorem~\ref{thm:construction} since it implies Theorem~\ref{thm:main} when $S$ is chosen sufficiently large.
Assume $x_1,\dots,x_n$ are chosen as described in Theorem~\ref{thm:construction}.
We know that the code with the generator matrix $\mathbf T\cdot\mathbf A$, which satisfies (\ref{eq:zeroconstraints}) by Lemma \ref{lemma:general}, is an $[n,k]_{\mathsf E/\mathsf F}$ Gabidulin code if the $x_i$'s are $\mathsf F$--linearly independent and $\mathbf T$ is invertible.
Define $\mathbf M\in\mathsf E^{m\times n}$ as in Lemma \ref{lemma:lindep}, by which the $x_i$'s are $\mathsf F$--linearly independent iff $\det\mathbf M_{[n],:}\neq 0$.
Furthermore, since $\mathbf A = \mathbf M_{[k],:}$, we have that
\begin{align*}
    \mathbf T &= \left[\det\begin{bmatrix}\mathbf e_j&\mathbf A_{:,\mathcal Z_i}\end{bmatrix}\right]_{i,j\in[k]}
    = \left[\det\begin{bmatrix}\mathbf e_j&\mathbf M_{[k],\mathcal Z_i}\end{bmatrix}\right]_{i,j\in[k]}.
\end{align*}
Therefore, it is sufficient to show that
$\mathbb P(\det\mathbf T\cdot\det\mathbf M_{[n],:}\neq 0)\geq 1-\frac{n+k(k-1)}{|S|}$ or that $\mathbb P(\det\mathbf T\cdot\det\mathbf M_{[n],:}= 0)\leq \frac{n+k(k-1)}{|S|}$.

In order to show this, we will appeal to Lemma \ref{lemma:schwartz_zippel}.
Define the multivariate polynomial
\begin{align}\label{eq:polyf}
    f(\mathbf X) = \det\left(\left[\det\begin{bmatrix}\mathbf e_j&\mathbf X_{[k],\mathcal Z_i}\end{bmatrix}\right]_{i,j\in[k]}\right)\cdot \det\mathbf X_{[n],:}
\end{align}
for the variables $\mathbf X_{ij}$, $i\in[m],j\in[n]$ seen as an $m\times n$ matrix~$\mathbf X$.
Then, it suffices to show that $\mathbb P(f(\mathbf M)=0)\leq\frac{n+k(k-1)}{|S|}$.
Hence, by Lemma \ref{lemma:schwartz_zippel},
all we need to show is that $f$ is a nonzero polynomial with total degree at most $n+k(k-1)$.

To show the bound on the degree of~$f$, recall the Leibniz formula for the determinant of an $n\times n$ square matrix $\mathbf Z$, which is
$\det\mathbf Z = \sum_{\pi\in S_n}\operatorname{sgn}(\pi)\prod_{i=1}^n\mathbf Z_{\pi(i),i}$,
where $S_n$ is the permutation group of size $n$ and $\operatorname{sgn}(\pi)$ is the sign of the permutation $\pi$.
Thus, when the entries of $\mathbf Z$ are polynomials, we can write
\begin{align}
    \deg\det\mathbf Z\leq \sum_{j\in[n]}\max_{i\in[n]}\deg\mathbf Z_{i,j}.
\end{align}
Hence, $\deg\det\mathbf X_{[n],:}\leq n$ since each entry of $\mathbf X$ has degree one.
Furthermore,
$\deg\det\begin{bmatrix}\mathbf e_j&\mathbf X_{[k],\mathcal Z_i}\end{bmatrix}\leq k-1$; hence,
$\deg\det\left(\left[\det\begin{bmatrix}\mathbf e_j&\mathbf X_{[k],\mathcal Z_i}\end{bmatrix}\right]_{i,j\in[k]}\right)\leq k(k-1)$.
As a result, $\deg f\leq n+k(k-1)$.

To show that~$f$ is a nonzero polynomial, we will use the simplified GM--MDS conjecture of Dau \textit{et al.} \cite{dau2014existence}, which was proved in \cite{yildiz2019optimum} and \cite{lovett2018mds}.

\begin{lemma}[{Simplified GM--MDS conjecture \cite[Thm.~3]{yildiz2019optimum}\footnote{Compared to \cite[Thm.~3]{yildiz2019optimum}, in the statement of Lemma \ref{lemma:gmmds}, the variable $\alpha_j$ is replaced with $-\alpha_j$ and the matrix $\mathbf P$ is flipped about its vertical axis, which may only change the sign of the determinant.}}]
\label{lemma:gmmds}
    Let $\mathcal Z_1,\dots,\mathcal Z_k\subset[n]$ be subsets of size $k-1$.
    Then, they satisfy (\ref{eq:ineqcond}) if and only if the determinant of the $k\times k$ matrix
    \begin{align}\label{eq:matrixp}
        \mathbf P &= 
        \begin{bmatrix}
            \prod_{t\in\mathcal Z_1}(-\alpha_t)& \cdots & \sum_{t\in\mathcal Z_1}(-\alpha_t) & 1\\
            \prod_{t\in\mathcal Z_2}(-\alpha_t) & \cdots & \sum_{t\in\mathcal Z_2}(-\alpha_t) & 1\\
            \vdots & &\vdots & \vdots\\
            \prod_{t\in\mathcal Z_k}(-\alpha_t) & \cdots & \sum_{t\in\mathcal Z_k}(-\alpha_t) & 1
        \end{bmatrix}
    \end{align}
    with entries $\mathbf P_{ij} = \sum_{\mathcal S\subseteq \mathcal Z_i, |\mathcal S|=k-j}\prod_{t\in \mathcal S}(-\alpha_t)$
    is not the zero polynomial in the variables $\alpha_1,\dots,\alpha_n$.
\end{lemma}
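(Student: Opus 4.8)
The plan is to first re-interpret the matrix $\mathbf P$ polynomial-theoretically. Since $\mathbf P_{ij}=\sum_{\mathcal S\subseteq\mathcal Z_i,\,|\mathcal S|=k-j}\prod_{t\in\mathcal S}(-\alpha_t)$ is the elementary symmetric polynomial of degree $k-j$ in the variables $\{-\alpha_t:t\in\mathcal Z_i\}$, the $i$th row of $\mathbf P$ is exactly the coefficient vector of the monic polynomial
\[
p_i(x)=\prod_{t\in\mathcal Z_i}(x-\alpha_t)=\sum_{j=1}^{k}\mathbf P_{ij}\,x^{\,j-1},
\]
which has degree $|\mathcal Z_i|=k-1$. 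Hence $\det\mathbf P$ is, up to sign, the determinant of the coefficient matrix of $p_1,\dots,p_k$ with respect to the basis $1,x,\dots,x^{k-1}$, and $\det\mathbf P\not\equiv0$ holds if and only if $p_1,\dots,p_k$ are linearly independent over the rational function field $\mathsf F(\alpha_1,\dots,\alpha_n)$. I would prove the two implications separately against this reformulation.

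For the \emph{necessity} of (\ref{eq:ineqcond}) (equivalently, that its failure forces $\det\mathbf P\equiv0$), suppose some nonempty $\Omega\subseteq[k]$ has $|W|+|\Omega|>k$, where $W=\bigcap_{i\in\Omega}\mathcal Z_i$. For each $i\in\Omega$ the polynomial $p_i$ vanishes at every $\alpha_t$, $t\in W$, so $p_i$ is divisible by $q(x)=\prod_{t\in W}(x-\alpha_t)$, which has degree $|W|\ge k-|\Omega|+1$. The multiples of $q$ of degree at most $k-1$ form an $\mathsf F(\alpha)$--space of dimension $k-|W|\le|\Omega|-1$, so the $|\Omega|$ polynomials $\{p_i:i\in\Omega\}$ lying inside it must be linearly dependent; the corresponding rows of $\mathbf P$ are then dependent and $\det\mathbf P\equiv0$.

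The substantive direction is \emph{sufficiency}: assuming (\ref{eq:ineqcond}), I must show $\det\mathbf P\not\equiv0$. Observe first that $\mathbf P_{ij}$ is homogeneous of degree $k-j$, so each Leibniz term $\operatorname{sgn}(\pi)\prod_i\mathbf P_{i,\pi(i)}$ is homogeneous of degree $\sum_i(k-\pi(i))=\binom{k}{2}$; thus $\det\mathbf P$ is homogeneous of degree $\binom{k}{2}$, and it suffices to exhibit a single monomial $\prod_t\alpha_t^{d_t}$ whose net coefficient is nonzero. I would attack this by exhibiting a \emph{witness}: choose for each $i$ a subset $\mathcal S_i\subseteq\mathcal Z_i$ (of the appropriate size) so that the monomial $\prod_i\prod_{t\in\mathcal S_i}\alpha_t$ arises in exactly one way across the whole expansion, whence its coefficient is $\pm1\neq0$. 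Condition (\ref{eq:ineqcond}) is precisely the Hall--type expansion requirement needed here: writing $Y_i=[n]\setminus\mathcal Z_i$, it reads $\bigl|\bigcup_{i\in\Omega}Y_i\bigr|\ge(n-k)+|\Omega|$, which is exactly what guarantees that the supporting subsets can be kept disjoint enough for the witnessing monomial to be uniquely realizable. One concrete way to make this rigorous is to specialize $\alpha_t\mapsto y^{c_t}$ for integer exponents $c_t$ chosen greedily using (\ref{eq:ineqcond}), collapsing every monomial to a power of the single variable $y$ and arranging a strictly maximal, uniquely attained $y$--degree, so that nonvanishing of that top coefficient forces $\det\mathbf P\not\equiv0$. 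Alternatively, one can argue by strong induction on $k$ (or on $\sum_i|\mathcal Z_i|$), peeling off an index that the condition guarantees can be removed while preserving (\ref{eq:ineqcond}) for a smaller instance.

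The main obstacle is this sufficiency direction --- it is exactly the (simplified) GM--MDS statement, and the difficulty is purely combinatorial: translating the intersection inequality into the existence of a uniquely realizable monomial (or of a valid inductive reduction). The delicate point is verifying that the chosen witness really is unique, i.e.\ that no competing permutation/subset combination contributes the same monomial with a cancelling sign; this is where the full strength of (\ref{eq:ineqcond}) over \emph{all} nonempty $\Omega$, rather than merely over singletons, must be invoked.
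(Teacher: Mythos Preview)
The paper does not prove this lemma at all: it is quoted verbatim (up to the cosmetic sign/column-flip noted in the footnote) from \cite[Thm.~3]{yildiz2019optimum}, and the paper explicitly says the result ``was proved in \cite{yildiz2019optimum} and \cite{lovett2018mds}.'' So there is no ``paper's own proof'' to compare your proposal against; the paper uses Lemma~\ref{lemma:gmmds} as a black box imported from the GM--MDS literature.

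That said, your write-up is a fair summary of what such a proof must contain. The polynomial reformulation (row~$i$ of $\mathbf P$ is the coefficient vector of $\prod_{t\in\mathcal Z_i}(x-\alpha_t)$) is exactly how the paper itself reads $\mathbf P$ in~\eqref{equation:PisPoly}, and your necessity argument via the common factor $q(x)$ is correct and standard. For sufficiency you correctly identify that this \emph{is} the GM--MDS theorem, and you sketch two known strategies (a unique-monomial witness via a Hall-type system of representatives, or an inductive reduction). Both are legitimate outlines, but neither is carried out: the unique-realizability of the witness monomial and the existence of a valid inductive step are precisely the nontrivial combinatorial lemmas that \cite{yildiz2019optimum} and \cite{lovett2018mds} establish, and your proposal stops short of supplying them. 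In short, your proposal is not wrong, but its hard direction defers to the same external results the paper already cites, so it does not add anything beyond the paper's treatment.
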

Notice that the $i$'th row of $\mathbf P$ in (\ref{eq:matrixp}) consists of the coefficients of the polynomial
\begin{align}\label{equation:PisPoly}
\prod_{j\in\mathcal Z_i}(X-\alpha_j)=\sum_{j=1}^k\mathbf \mathbf \mathbf P_{ij}X^{j-1}
\end{align} 
in the variable~$X$.
We will also show that $\mathbf P$ can be written in the form of (\ref{eq:matrixt}). To see how, define the $m\times n$ Vandermonde matrix $\mathbf V=\left[\alpha_j^{i-1}\right]_{i\in[m],j\in[n]}$.
Fix $i\in[k]$ and consider the determinant of the $k\times k$ Vandermonde matrix
$\mathbf W=\begin{bmatrix}\mathbf v & \mathbf V_{[k],\mathcal Z_i}\end{bmatrix}$,
where $\mathbf v$ is a column vector whose $j$'th entry is $X^{j-1}$ for $j\in[k]$:
\begin{align*}
    \det\mathbf W = c_i\prod_{j\in\mathcal Z_i}(X-\alpha_j) \overset{\eqref{equation:PisPoly}}{=} c_i\sum_{j\in[k]}\mathbf P_{ij}X^{j-1}
\end{align*}
where $c_i=\prod_{j_1<j_2\in\mathcal Z_i}(\alpha_{j_1}-\alpha_{j_2})\neq 0$.
On the other hand, by the linearity of the determinant in the first column, we can write
\begin{align*}
    \det\mathbf W = \sum_{j\in[k]}\det\begin{bmatrix}\mathbf e_j & \mathbf V_{[k],\mathcal Z_i}\end{bmatrix} X^{j-1},
\end{align*}
since $\mathbf v=\sum_{j\in[k]}\mathbf e_jX^{j-1}$.
As a result, the entries of $\mathbf P$ satisfy
\begin{align}\label{eq:entriesP}
    c_i\mathbf P_{ij} = \det\begin{bmatrix}\mathbf e_j & \mathbf V_{[k],\mathcal Z_i}\end{bmatrix}
\end{align}

Now, let us evaluate $f$ in (\ref{eq:polyf}) at $\mathbf V$, which will give a multivariate polynomial in the variables $\alpha_j$:
\begin{align*}
    f(\mathbf V)
    &= \det\left(\left[\det\begin{bmatrix}\mathbf e_j&\mathbf V_{[k],\mathcal Z_i}\end{bmatrix}\right]_{i,j\in[k]}\right)\cdot \det\mathbf V_{[n],:}\\
    &\overset{\eqref{eq:entriesP}}{=} \det\left(\left[c_i\mathbf P_{ij}\right]_{i,j\in[k]}\right)\cdot \det\mathbf V_{[n],:}\\
    &= \det\mathbf P\cdot\left(\prod_{i\in[k]}c_i\right)\cdot\det\mathbf V_{[n],:}.
\end{align*}
By Lemma \ref{lemma:gmmds}, $\det \mathbf P$ is a nonzero polynomial. Furthermore, we have that $c_i\neq 0$ and $\det\mathbf V_{[n],:}=\prod_{j_1<j_2\in[n]}(\alpha_{j_1}-\alpha_{j_2})\neq 0$.
Hence, $f(\mathbf V)$ is not the zero polynomial in the variables $\alpha_j$.
Therefore, $f(\mathbf X)$ itself cannot be the zero polynomial in the variables $\mathbf X_{ij}$.
\hfill\qed


\bibliographystyle{IEEEtran}
\bibliography{refs}

\end{document}